\newtheorem{lemma}{Lemma}
\newtheorem{theorem}{Theorem}
\def\G{\Gamma}
\def\D{\Delta}
\def\seq{\Rightarrow}
\def\r#1#2{\frac{\textstyle #1}{\textstyle #2}}
\def\HT{\emph{HT\/}}
\def\HTA{\emph{HTA\/}}
\def\HTAD{\emph{HTA}\!+\!\emph{Defs}}
\def\HTD{\emph{HT\/}^\infty\!+\!gr(\emph{Defs\/})}
\def\co{\emph{count\/}}
\def\atl{\emph{Atleast\/}}
\def\atm{\emph{Atmost\/}}
\def\num{\overline}
\def\head{\emph{Head}}
\def\body{\emph{Body}}
\def\ar{\leftarrow}
\def\lrar{\leftrightarrow}
\def\no{\emph{not}}
\def\clingo{{\sc clingo}}
\def\gringo{{\sc gringo}}
\def\beq{\begin{equation}}
\def\eeq#1{\label{#1}\end{equation}}
\def\ba{\begin{array}}
\def\ea{\end{array}}
\def\val#1#2{\emph{val\,}_{#1}({#2})}
\def\p2f{\hbox{p2f}}
\def\bft{{\bf t}}
\def\bfx{{\bf X}}
\def\bfv{{\bf V}}
\def\bfw{{\bf W}}
\begin{document}

\lefttitle{Vladimir Lifschitz}

\jnlPage{1}{16}

\begin{authgrp}
\title[Strong Equivalence of Logic Programs with Counting]
{Strong Equivalence\\ of Logic Programs with Counting}
  \author{Vladimir Lifschitz}
  \affiliation{University of Texas at Austin, USA}
\end{authgrp}

\maketitle

\begin{abstract}
In answer set programming, two groups of rules are considered strongly
equivalent if they have the same meaning in any context.
In some cases, strong equivalence of programs in the
input language of the grounder \gringo\ can be established
by deriving rules of each
program from rules of the other.  The possibility of such proofs has been
demonstrated for a subset of that language  that includes comparisons,
arithmetic operations, and simple choice rules, but not aggregates.
This method is extended here to a class of programs
in which some uses of the \verb|#count| aggregate are allowed.
This paper is under consideration for acceptance in TPLP.
\end{abstract}

\section{Introduction}

In answer set programming \cite{mar99,nie99,gel14,lif19a},
two groups of rules are considered strongly
equivalent if, informally speaking, they have the same meaning in any context
\cite{lif01}.
We are interested in proving strong equivalence of programs in the
input language of the grounder \gringo\ \cite{gringomanual}
by deriving rules of each
program from rules of the other.  The possibility of such proofs has been
demonstrated for the subset of that language called mini-\gringo\
\cite{lif19,lif21a}.  Programs allowed in that subset may include comparisons,
arithmetic operations, and simple choice rules, but not aggregates.

The process of proving strong equivalence uses the translation~$\tau^*$
\cite[Section~6]{lif19} that transforms mini-\gringo\
rules into first-order formulas with two sorts of variables---for
numerals and for arbitrary precomputed terms. 
If two mini-\gringo\ programs, rewritten as
sets of sentences, are equivalent in the deductive system \HTA\
(``here-and-there with arithmetic'')
then they are strongly equivalent \cite[Section~4]{lif21a}.

In this paper,~$\tau^*$ is extended to a superset of mini-\gringo\
in which the
\verb|#count| aggregate can be used in a limited way.  The study of the
strong equivalence relation between \gringo\ programs with counting and
other aggregates is important because these constructs are
widely used in answer set programming, and because some properties
of this relation in the presence of aggregates
may seem counterintuitive.  For instance, the rule
\beq
\verb|q :- #count{X : p(X)} >= Y, Y = 1.|
\eeq{g1}
is strongly equivalent to each of the simpler rules
\beq
\verb|q :- #count{X : p(X)} >= 1.|
\eeq{g3}
and
\beq
\verb|q :- p(X).|
\eeq{g3a}
---as could be expected.  But the rule
\beq
\verb|q :- #count{X : p(X)} = Y, Y >= 1.|
\eeq{g2}
is not strongly equivalent to~(\ref{g3}) and~(\ref{g3a}).  Indeed, adding
the rules
\beq\ba l
\verb|p(a).|\\
\verb|p(b) :- q.|
\ea\eeq{bad}
to~(\ref{g2}) gives a program without stable
models.\footnote{\label{ft1} This is a modification of an example due to
  Gelfond and Zhang \citeyear{gel19}.  This example shows that the use of
  functional notation for {\tt \#count} is sometimes misleading. The
  correspondence between sets and their cardinalities is a total function
  classically, but not intuitionistically.  We return to this question in
the discussion of related work (Section~\ref{sec:rw}).}

The syntax and semantics of mini-\gringo\ rules
are reviewed in Sections~\ref{sec:synmg} and~\ref{sec:semmg}.
The semantics is defined by transforming rules
into infinite sets of propositional formulas
\cite[Section~3]{lif19}
and then appealing to the propositional stable model semantics
\cite[Section~5.2]{lif19a}.  Adding the \verb|#count| aggregate,
described in Sections~\ref{sec:synmgc} and~\ref{sec:semmgc}, involves
stable models of 
\emph{infinitary} propositional formulas \cite[Section~2]{tru12},
in the spirit of the approach of Gebser et al.~\citeyear{geb15}.
The translation~$\tau^*$ is reviewed in Section~\ref{sec:formulas} and
extended to mini-\gringo\ with counting in
Sections~\ref{sec:signature},~\ref{sec:trans}.
After providing additional background information in Section~\ref{sec:infht},
we state in Section~\ref{sec:properties} three theorems expressing
properties of the translation and show how the strong equivalence
of rules~(\ref{g1})--(\ref{g3a}) can be proved by deriving them from each
other.  Proofs of the theorems are given in Section~\ref{sec:proofs}.

\section{Review: syntax of mini-gringo} \label{sec:synmg}

The description of mini-\gringo\ programs below uses ``abstract syntax,''
which disregards some details related to representing programs by
strings of ASCII characters.
We assume that three countably infinite sets of symbols are selected:
\emph{numerals}, \emph{symbolic constants}, and \emph{variables}.
We assume that a 1-1 correspondence between numerals
and integers is chosen; the numeral corresponding to an integer~$n$ is
denoted by $\num n$.
\emph{Precomputed terms} are numerals, symbolic constants, and the
symbols \emph{inf}, \emph{sup}.  We assume that a total order on the set
of precomputed terms is selected, with the least element \emph{inf} and the
greatest element \emph{sup}, so that numerals are contiguous and ordered in
the standard way.

  Terms allowed in a mini-\gringo\ program are formed from
precomputed terms and variables using the binary operation symbols
$$+\quad-\quad\times\quad/\quad\backslash \quad..$$
An \emph{atom} is a symbolic constant optionally followed by a tuple
of terms in parentheses.  A \emph{literal} is an atom
possibly preceded by one or two occurrences of \emph{not}. A \emph{comparison}
is an expression of the form
$t_1\prec t_2$, where $t_1$, $t_2$ are mini-{\sc gringo} terms and $\prec$ is one
of the six comparison symbols
\beq
=\quad\neq\quad<\quad>\quad\leq\quad\geq
\eeq{comp}

A \emph{mini-\gringo\ rule} is an expression of the form
\beq
\head\ar\body,
\eeq{rule}
where
\begin{itemize}
\item
\strut$\body$ is a conjunction (possibly empty) of literals and comparisons,
and
\item
  $\head$
  is either an atom (then~(\ref{rule}) is a
  \emph{basic rule\/}), or an atom in braces
(then~(\ref{rule}) is a \emph{choice rule}\/), or empty (then~(\ref{rule})
is a \emph{constraint}).
\end{itemize}
A \emph{mini-\gringo\ program} is a finite set of mini-\gringo\ rules.

\section{Review: semantics of mini-gringo} \label{sec:semmg}

The semantics of ground terms is defined by assigning to every ground term~$t$
the finite set~$[t]$ of its \emph{values} \cite[Section~3]{lif19}.  Values
of a ground term are precomputed terms.  For instance,
$$
[\num 2/\num 2]=\{\num 1\},\
[\num 2/\num 0]=\emptyset,\
[\num 0\,..\,\num 2]=\{\num 0, \num 1, \num 2\}.
$$
If a term is interval-free (that is, does not contain $..$) then it has at
most one value.
For any ground terms $t_1, \dots, t_n$, by $[t_1, \dots, t_n]$ 
we denote the set of tuples $r_1, \dots, r_n$ such that
$r_1 \in [t_1], \dots,$ $r_n \in [t_n]$.

Stable models of a mini-\gringo\ program
are defined as stable models of the set of propositional formulas obtained
from it by applying a syntactic transformation denoted by~$\tau$
\cite[Section~3]{lif19}.
These propositional formulas are built from \emph{precomputed atoms\/}---atoms
$p({\bf r})$ such that members of the tuple~$\bf r$ are precomputed terms.
Thus every stable model is a set of precomputed atoms.

The transformation~$\tau$ is defined as follows.
For any ground atom $p({\bf t})$,
\begin{itemize}
\item
$\tau(p({\bf t}))$ is
$\bigvee_{{\bf r} \in [{\bf t}]} p({\bf r})$, 
\item
$\tau (\no\ p({\bf t}))$ is 
$\bigvee_{{\bf r} \in [{\bf t}]} \neg p({\bf r})$, and
\item
 $\tau (\no\ \no\ p({\bf t}))$ is
$\bigvee_{{\bf r} \in [{\bf t}]} \neg\neg p({\bf r})$.
\end{itemize}
For any ground comparison $t_1 \prec t_2$, $\tau(t_1 \prec t_2)$ is
$\top$ if the relation~$\prec$ holds between some~$r_1$ from $[t_1]$ and
some~$r_2$ from~$[t_2]$, and $\bot$ otherwise.
The result of applying~$\tau$ to a conjunction $B_1\land B_2\land \cdots$
is $\tau(B_1)\land \tau(B_2)\land\cdots$.
If~$R$ is a ground basic rule $p({\bf t}) \ar \body$
then $\tau(R)$ is the propositional formula
\beq
\tau(\body) \to
\bigwedge_{{\bf r}\in [{\bf t}]} p({\bf r}).
\eeq{deftau1}
If~$R$ is a ground choice rule $\{p({\bf t})\} \ar \body$
then $\tau(R)$ is the propositional formula
\beq
\tau(\body) \to \bigwedge_{{\bf r} \in [{\bf t}]}
(p({\bf r})\lor\neg p({\bf r})).
\eeq{deftau2}
If~$R$ is a ground constraint $\ar\body$ then $\tau(R)$ is
\beq
\neg\tau(\body).
\eeq{deftau3}
For any mini-\gringo\ program~$\Pi$, $\tau(\Pi)$ is the set of
propositional formulas~$\tau(R)$ for all ground rules~$R$ that can be
obtained from rules of~$\Pi$ by substituting precomputed terms for variables.

For example, substituting a precomputed term~$r$ for~$X$ in the choice rule
\beq
\{p(\num 2)\} \ar p(X)
\eeq{example}
gives the ground rule $\{p(\num 2)\} \ar p(r)$, so that~$\tau$
transforms~(\ref{example}) into the set of all formulas of the form
$$p(r) \to p(\num 2) \lor \neg p(\num 2).$$

\section{Mini-gringo with counting: syntax}  \label{sec:synmgc}

We extend the class of mini-\gringo\ rules as follows.
An \emph{aggregate element} is a pair
$\bfx:{\bf L},$
where $\bfx$ is a tuple of distinct variables,
and $\bf L$ is a conjunction of
literals and comparisons such that every member of~$\bfx$ occurs in~$\bf L$.
In mini-\gringo\ with counting, the body of a rule is allowed
to contain, besides literals and comparisons, \emph{aggregate atoms} of the
forms
\beq
\co \{E\} \geq t,\  \co \{E\} \leq t,\ 
\eeq{aa}
where~$E$ is an aggregate element, and~$t$ is an interval-free term.
The conjunction of aggregate atoms~(\ref{aa}) can be written as
$\co\{E\}=t$.

A variable that occurs in a rule~$R$ is \emph{local} in~$R$ if each of
its occurrences is within an aggregate element, and \emph{global} otherwise.
A rule is \emph{pure} if, for every aggregate element $\bfx:{\bf L}$ in its
body, all variables in the tuple~$\bfx$ are local.  For example, all
rules that do not contain aggregate elements are pure.  The rules
\beq
q \ar \co\{X:p(X,Y)\} \leq \num 2
\eeq{pr1}
and
\beq
q \ar \co\{X:p(X,Y)\} \leq \num 2\land Y=\num 1\,..\,\num {10},
\eeq{pr0}
are pure, because~$X$ is local in each of them.  The rule
\beq
q \ar \co\{X:p(X,Y)\} \leq \num 2\land X=\num 1\,..\,\num {10}
\eeq{notsafe}
is not pure, because~$X$ is global.\footnote{In response to rules like this,
  the current version of \gringo\ produces a warning message: {\tt global
    variable in tuple of aggregate element.}}

A \emph{program in mini-\gringo\ with counting}, or an
\emph{{\sc mgc} program}, is a finite set of pure rules.  Allowing non-pure
rules in a program would necessitate making the semantics more complicated;
see Footnote~\ref{ft2}.

An expression of the form
$$\num m\,\{{\bf X}:A\}\,\num n \ar\body$$
where~$\bf X$ is a tuple of distinct variables, $A$ is an atom, and
$\body$ is a conjunction of literals, comparisons and aggregate atoms,
can be used as shorthand for the group of three rules:
$$\ba l
\{A\} \ar\body,\\
\ar \body, \co\{{\bf X}:A\} \leq \num{m-1},\\
\ar \body, \co\{{\bf X}:A\} \geq \num{n+1}.
\ea$$

\section{Mini-gringo with counting: semantics} \label{sec:semmgc}

To define the semantics of {\sc mgc} programs, we will
extend the definition of~$\tau$
(Section~\ref{sec:semmg}) to aggregate
atoms~(\ref{aa}) such that~$t$ is a ground term.  Since~$t$ is
interval-free,~$[t]$ is either a singleton or empty.  If $[t]$ is
a singleton $\{c\}$ then $\tau(\co\{\bfx:{\bf L}\} \geq t)$ is defined as
the infinite disjunction
\beq
\bigvee_{\Delta\,:\,\num{|\Delta|} \geq c}\;\bigwedge_{{\bf x}\in\Delta}\;\bigvee_{\bf w}
   \tau\!\left({\bf L}^{\bfx,\bfw}_{\,{\bf x},\;{\bf w}}\right),
\eeq{inf1}
and $\tau(\co\{\bfx:{\bf L}\} \leq t)$ as the infinite conjunction
\beq
\bigwedge_{\Delta\,:\,\num{|\Delta|}>c}\;
\neg\bigwedge_{{\bf x}\in\Delta}\;\bigvee_{\bf w}
   \tau\!\left({\bf L}^{\bfx,\bfw}_{\,{\bf x},\;{\bf w}}\right),
\eeq{inf2}
where
\begin{itemize}
\item $\Delta$ ranges over finite sets of tuples of precomputed terms of
  the same length as~$\bfx$;
\item
  $\bfw$ is the list of variables that occur in $\bf L$ but do not belong
  to~$\bfx$;
  \item $\bf w$ ranges over tuples of precomputed terms of the same
    length as $\bfw$;
  \item the expression
  ${\bf L}^{\bfx,\bfw}_{\,{\bf x},\;{\bf w}}$ denotes the result of
  substituting $\bf x$, $\bf w$ for all occurrences of $\bfx$, $\bfw$
  in~{\bf L}.
  \end{itemize}
  (If~$c$ is a numeral~$\num n$ then the inequalities $\num{|\Delta|}\geq c$,
  $\num{|\Delta|}> c$ in these formulas
  can be written as $|\Delta|\geq n$, $|\Delta|>n$.)
If $[t]$ is empty then we define
$$\tau(\co\{\bfx:{\bf L}\} \geq t)=\tau(\co\{\bfx:{\bf L}\} \leq t)=\bot.$$

For example, the result of applying~$\tau$ to the body of rule~(\ref{pr1}) is
$$\bigwedge_{\Delta\,:\,|\Delta| > \num 2}\neg\bigwedge_{x\in\Delta}\;
\bigvee_w p(x,w).$$
(In this case, $\bfx$ is~$X$; $\bfw$ is~$Y$;
$\Delta$ ranges over sets of precomputed terms;
$w$ ranges over precomputed terms.)  In application to the aggregate
expression
$$\co\{X:p(X,r)\} \leq \num 2,$$
where~$r$ is a precomputed term,~$\tau$ gives
$$\bigwedge_{\Delta\,:\,|\Delta|>\num 2}\neg\bigwedge_{x\in\Delta}p(x,r)$$
($\bfw$ is empty).

A rule is \emph{closed} if it has no global variables.
It is clear that substituting precomputed terms for all global
variables in a pure rule is a closed pure rule.\footnote{\label{ft2} In
  case of a rule that is not pure, substituting precomputed terms for
  global variables may transform an aggregate element in the body
  into an expression that is not an aggregate element.  For instance,
  substituting $\num 1$ for~$X$ in rule~(\ref{notsafe}) turns
  $X:p(X,Y)$ into the expression
  $\num 1:p(\num 1,Y)$, which is not allowed by the syntax of
  mini-\gringo\ with counting.}  For any mini-\gringo\
program~$\Pi$, $\tau(\Pi)$ stands for the conjunction of the
formulas~$\tau(R)$ over all closed rules~$R$ that can be
obtained from rules of~$\Pi$ by such substitutions.  Thus~$\tau(\Pi)$ is
an infinitary propositional formula over the signature consisting of
all precomputed atoms.

For example,~$\tau$ transforms rule~(\ref{pr1}) into
$$\bigwedge_{\Delta\,:\,|\Delta| > \num 2}\neg\bigwedge_{x\in\Delta}\;
\bigvee_w p(x,w) \to q,$$
where $\Delta$ ranges over sets of precomputed terms, and
$w$ ranges over precomputed terms.  Rule~(\ref{pr0}) becomes
$$\bigwedge_r\left(\left(\bigwedge_{\Delta\,:\,|\Delta|>\num 2}
  \neg\bigwedge_{x\in\Delta}p(x,r)\right) \land \tau(r=\num 1..\num{10})
\to q\right),$$
where~$r$ ranges over precomputed terms, and~$\Delta$ ranges over sets of
precomputed terms.
The subformula $\tau(r=\num 1..\num{10})$ is~$\top$ if~$r$ is one of the
numerals $\num 1,\dots,\num{10}$, and~$\bot$ otherwise.

The semantics of {\sc mgc} described in this section, like the semantics of
aggregates proposed by Gebser et al.~\citeyear{geb15}, aims at modeling the
behavior of the answer set solver {\sc clingo}.  The definition in this
paper is simpler than the 2015 version, but more limited in scope,
because it is does not cover aggregates other than \verb|#count|.  The two
versions of the semantics of \verb|#count| are not completely equivalent,
however, because they
handle infinite sets in slightly different ways.  This difference does not
affect safe programs, and is in this sense inessential.

\section{Review: representing mini-gringo rules by formulas}
\label{sec:formulas}

The target language of the translation~$\tau^*$ is a first-order language
with two sorts: the sort \emph{general} and its subsort
\emph{integer}.  General variables are meant to
range over arbitrary precomputed terms, and we identify them with
variables used in mini-\gringo\ rules. Integer of the second sort are meant
to range over numerals (or, equivalently, integers).
The signature~$\sigma_0$ of the language includes
\begin{itemize}
\item all precomputed terms as object constants; an object constant
  is assigned the sort \emph{integer} iff it is a numeral;
\item the symbols~$+$, $-$ and~$\times$ as binary function constants;
  their arguments and values have the sort \emph{integer};
\item symbols $p/n$, where~$p$ is a symbolic constant,
  as $n$-ary predicate constants;
\item comparison symbols~(\ref{comp}) as binary predicate constants.
\end{itemize}
An atomic formula $(p/n)({\bf t})$ can be abbreviated as
$p({\bf t})$. An atomic formula $\prec\!\!(t_1,t_2)$, where~$\prec$ is a
comparison symbol, can be written as $t_1\prec t_2$.

Lifschitz et al.~\citeyear{lif19} defined, for every mini-{\sc gringo}
term~$t$, a formula $\val tZ$ that expresses,
informally speaking, that~$Z$ is one of the values of~$t$.  For example,
$\val {\num 2}Z$ is $Z=\num 2$.
If $\bf t$ is a tuple $t_1,\dots,t_n$ of mini-{\sc gringo} terms, and
$\bf Z$ is a tuple $Z_1,\dots,Z_n$ of distinct general variables, then
$\val{\bf t}{\bf Z}$ stands for
$\val{t_1}{Z_1} \land \cdots \land \val{t_n}{Z_n}$.

The translation~$\tau^B$, which transforms literals
and comparisons into formulas over the signature~$\sigma_0$,
is defined in that paper as
follows:\footnote{The superscript~$B$ indicates that
this translation is intended for \emph{bodies} of rules.}
\begin{itemize}
\item
  $\tau^B(p({\bf t}))=
  \exists {\bf Z}(\val{\bf t}{\bf Z} \land p({\bf Z}))$;
\item
  $\tau^B(\no\ p({\bf t})) =
  \exists {\bf Z}(\val{\bf t}{\bf Z} \land \neg p({\bf Z}))$;
\item
  $\tau^B(\no\ \no\ p({\bf t})) =
  \exists {\bf Z}(\val{\bf t}{\bf Z} \land \neg\neg p({\bf Z}))$;
\item
$\tau^B(t_1\prec t_2) =
\exists Z_1 Z_2 (\val{t_1}{Z_1} \land \val{t_2}{Z_2} \land
Z_1\prec Z_2)$.
\end{itemize}
Here $Z_1$,~$Z_2$, and members of the tuple {\bf Z} are fresh general
variables.

The result of applying $\tau^*$ to a mini-\gringo\ rule
$H \ar B_1\land\cdots\land B_n$
can be defined as the universal closure of the formula
\beq\ba{ll}
B^*_1\land\cdots\land B^*_n\land\val{\bf t}{\bf Z}\to p({\bf Z})
&\hbox{ if }H\hbox{ is }p({\bf t}),\\
B^*_1\land\cdots\land B^*_n\land\val{\bf t}{\bf Z}
  \to p({\bf Z}) \lor \neg p({\bf Z})
&\hbox{ if }H\hbox{ is }p\{({\bf t})\},\\
\neg(B^*_1\land\cdots\land B^*_n)
&\hbox{ if $H$ is empty},
\ea\eeq{bstar}
where $B^*_i$ stands for $\tau^B(B_i)$, and~$\bf Z$ is a tuple of
fresh general variables.

For example, the
result of applying~$\tau^*$ to choice rule~(\ref{example}) is
$$\forall XZ(\tau^B(p(X))\land Z=\num 2 \to p(Z)\lor \neg p(Z));$$
$\tau^B(p(X))$ can be further expanded into $\exists Z(Z=X \land p(Z))$.

\section{Extending the target language}\label{sec:signature}

In this section, we extend the translation~$\tau^*$ to arbitrary pure rules.
This more general translation produces first-order formulas over the
signature~$\sigma_1$ that is obtained from~$\sigma_0$ by adding infinitely
many predicate constants
\beq
\atl^{\bfx;\bfv}_F\hbox{ and }\atm^{\bfx;\bfv}_F
\eeq{newpc}
where $\bfx$ and $\bfv$
are disjoint lists of distinct general variables, and~$F$ is a formula
over~$\sigma_0$ such that each of its free variables belongs to $\bfx$
or to~$\bfv$.\footnote{Adding infinitely many predicate symbols gives us a
  single signature that is sufficient for representing all {\sc mgc}
  rules.  The translation of any specific rule will only contain finitely many
  symbols, of course.}
The number of arguments of each of constants~(\ref{newpc})
is greater by~1 than the
length of~$\bfv$; all arguments are of the sort \emph{general}.

If~$n$ is a positive integer then the
formula $\atl^{\bfx,\bfv}_F(\bfv,\num n)$ is meant to express
that~$F$ holds for at least~$n$ values of~$\bfx$; symbolically,
\beq
\exists\bfx_1\cdots\bfx_n\left(\,\bigwedge_{i=1}^n F^\bfx_{\,\bfx_i}
  \land \bigwedge_{i<j}\neg(\bfx_i=\bfx_j)  \right),
\eeq{atln}
where $\bfx_1,\dots,\bfx_n$ are tuples of fresh general
variables.\footnote{An expression of the form
  $(X_1,X_2,\dots)=(Y_1,Y_2,\dots)$ stands for
  $X_1=Y_1\land X_2=Y_2\land\cdots$.}
For any precomputed term~$r$, the expression $\exists_{\geq r}\bfx F$
will stand for
\medskip

\begin{tabular}{ll}
  formula~(\ref{atln}),     & if $r=\num n>\num 0$,\\
  $\top$,                   & if $r \leq \num 0$,\\
  $\bot$,                   & if $r>\num n$ for all integers~$n$.
\end{tabular}
\medskip

The formula $\atm^{\bfx,\bfv}_F(\bfv,\num n)$ is meant to express
that~$F$ holds for at most~$n$ values of~$\bfx$; symbolically,
\beq
\forall\bfx_1\cdots\bfx_{n+1}\left(\bigwedge_{i=1}^{n+1}
  F^\bfx_{\,\bfx_i}
\to \,\bigvee_{i<j}\bfx_i=\bfx_j\right).
\eeq{atmn}
For any precomputed term~$r$, the expression
$\exists_{\leq r}\bfx F$ will stand for
\medskip

\begin{tabular}{ll}
  formula~(\ref{atmn}),     & if $r=\num n\geq \num 0$,\\
  $\bot$,                   & if $r < \num 0$,\\
  $\top$,                   & if $r>\num n$ for all integers~$n$.
\end{tabular}
\medskip

The set of all sentences of the forms
\beq
\forall \bfv \left(\atl^{\bfx;\bfv}_F(\bfv,r)\lrar
  \exists_{\geq r} \bfx F\right),
\eeq{def1}
\beq
\forall \bfv \left(\atm^{\bfx;\bfv}_F(\bfv,r)\lrar
  \exists_{\leq r}\bfx F\right)
\eeq{def2}
will be denoted by \emph{Defs}.

\section{Representing pure rules by formulas}\label{sec:trans}

To extend the definition of~$\tau^*$ reproduced in Section~\ref{sec:formulas}
to arbitrary pure rules, we need to say how to choose $B^*_i$ in~(\ref{bstar})
when $B_i$ includes an aggregate element $\bfx:{\bf L}$.  Let~$\bfv$ be
the list of global variables that occur in $\bf L$, and let~$\bfw$ be the
list of local variables that occur in~$\bf L$ but are not
included in~$\bfx$.  Then~$B^*_i$ is defined as
$$
\exists C\left(\val t C
\land \atl^{\bfx;\bfv}_{\exists \bfw\tau^B({\bf L})}(\bfv,C)\right)
$$
if $B_i$ is $\co\{\bfx:{\bf L}\} \geq t$, and as
$$
\exists C\left(\val t C
\land \atm^{\bfx;\bfv}_{\exists \bfw\tau^B({\bf L})}(\bfv,C)\right)
$$
if $B_i$ is $\co\{\bfx:{\bf L}\} \leq t$,
where~$C$ is a fresh general variable.

For example, the result of applying~$\tau^*$ to rule~(\ref{pr1}) is
$$
\exists C
\left(\val {\num 2} C \land \atm^{X;}_{\exists Y\tau^B(p(X,Y))}(C)\right)
\to q
$$
($\bfv$ is empty, $\bfw$ is $Y$).
The result of applying~$\tau^*$ to rule~(\ref{pr0}) is
$$
\forall Y\left(\exists C\left(\val {\num 2} C \land\atm^{X;Y}_{\tau^B(p(X,Y))}(Y,C)\right)\land
\tau^B(Y=\num 1\,..\,\num {10}) \to q\right)
$$
($\bfv$ is~$Y$, $\bfw$ is empty).

For any {\sc mgc}
program~$\Pi$, $\tau^*(\Pi)$ stands for the conjunction of the formulas
$\tau^*(R)$ for all rules~$R$ of~$\Pi$.  Thus~$\tau^*(\Pi)$ is a
sentence over the signature~$\sigma_1$.

\section{Review: infinitary logic of here-and-there} \label{sec:infht}

Some of the properties of the translation~$\tau^*$ discussed below refer to
the deductive system of infinitary propositional logic of here-and-there
\cite{har17}, denoted by~$\HT^\infty$.  In this section, we reproduce
the definition of that system.

The derivable objects of $\HT^{\infty}$ are \emph{sequents\/}---expressions
of the form $\G\seq F$, where~$F$ is an infinitary propositional
formula, and 
$\G$ is a finite set of infinitary propositional
formulas (``$F$ under assumptions~$\G$'').
To simplify notation, we write $\G$ as a list.
We identify a sequent of the form $\seq F$ with the formula~$F$.

The axiom schemas of $\HT^\infty$ are
$$F \seq F,$$
$$
F\lor(F\to G)\lor\neg G
$$
and
\beq
\bigwedge_{\alpha \in A}\ \ \bigvee_{F \in \mathcal{H}_\alpha} F \to 
\bigvee_{(F_\alpha)_{\alpha \in A}}\ \  \bigwedge_{\alpha \in A} F_\alpha,
\eeq{dist_cod}
where $(\mathcal{H}_\alpha)_{\alpha \in A}$ is a non-empty family of sets
of formulas; the disjunction in the
consequent of (\ref{dist_cod}) extends over all elements
$(F_\alpha)_{\alpha \in A}$ of the Cartesian product of the family
$(\mathcal{H}_\alpha)_{\alpha \in A}$.
The inference rules of $\HT^\infty$ are the introduction
and elimination rules for the propositional connectives
shown in the table above and the weakening rule
$$
(W)\;\r {\G\seq F}{\G,\D\seq F}.
$$
\begin{table}
\begin{center}
\begin{tabular}{ll}
\hline
\\
$
\!(\land I)\;\r	{\G\seq H \quad\hbox{for all }H\in\mathcal H}
		{\G\seq \mathcal H^\land}$
&$\quad		
(\land E)\;\r	{\G\seq \mathcal H^\land}
		{\G\seq H}
	\quad(H\in\mathcal H)
$
\\ \\
$\!(\lor I)\;\r	{\G\seq H}
		{\G\seq \mathcal H^\lor}
\quad(H\in\mathcal H)$
&\quad				
$(\lor E)\;\r{\G\seq \mathcal H^\lor \qquad \D,H \seq F
    \quad\hbox{for all }H\in\mathcal H}
	    {\G,\D\seq F}$
\\ \\
$\!(\to\!\! I)\;\r	{\G,F\seq G}
			{\G\seq F\to G}$
&$\quad				
(\to\!\! E)\;\r	{\G\seq F\quad \D\seq F \to G}
			{\G,\D\seq G}$\\
\\
\hline
\end{tabular}
\end{center}
\caption{Introduction and elimination rules of infinitary propositional
  logic. By~$\mathcal H^\land$ and
$\mathcal H^\lor$ we denote the conjunction and disjunction of all
formulas in~$\mathcal H$.}
\end{table}
Falsity and   negation are not mentioned in the axiom schemas and
inference rules of~$\HT^\infty$
  because~$\bot$ is considered shorthand for $\emptyset^\lor$, and
  $\neg F$ is shorthand for $F\to\bot$.

The set of \emph{theorems of HT\/$^\infty$} is the smallest set of sequents
that includes the axioms of the system and is closed under the application
of its inference rules.
We say that  formulas~$F$ and~$G$ are \emph{equivalent in HT\/$^\infty$} if 
$F\lrar G$ is a theorem of~\HT$^\infty$. 

The role of this deductive system is determined by the fact that
two infinitary propositional formulas are strongly equivalent to each
other if and only if they are equivalent in $\HT^\infty$
\cite[Corollary~2]{har17}.

\section{Properties of the generalized translation} \label{sec:properties}

Informally speaking, a pure rule~$R$ has the same meaning as the
sentence~$\tau^*(R)$.
This claim is made precise in Theorem~\ref{thm1} below.  The statement
of the theorem refers to the infinitary propositional formulas
obtained from sentences over~$\sigma_1$ by applying
the grounding operator $gr$, which is defined recursively:
\begin{itemize}
\item 
  $gr(\bot)$ is $\bot$;
\item
  if $F$ is $\prec(t_1,t_2)$, where~$\prec$ is a comparison symbol,
  then $gr(F)$ is $\top$ if the relation~$\prec$ holds for the values
  of $t_1$ and $t_2$, and $\bot$ otherwise;
\item
  if $F$ is $p({\bf t})$, where~$p$ is not a comparison symbol,
  then $gr(F)$ is obtained from~$F$ by replacing
  each member of the tuple~${\bf t}$ by its value;
\item
  $gr(F\odot G)$ is $gr(F)\odot gr(G)$ for every binary connective~$\odot$;
\item
  $gr(\forall X\,F)$ is the conjunction of the formulas
  $gr\left(F^X_r\right)$
  over all precomputed terms $r$ if $X$ is a general variable,
and over all numerals~$r$ if $X$ is an integer variable;
\item
  $gr(\exists X\,F)$ is the disjunction of the formulas
  $gr\left(F^X_r\right)$
  over all precomputed terms $r$ if $X$ is a general variable,
and over all numerals~$r$ if $X$ is an integer variable.
\end{itemize}
Thus $gr(F)$ is an infinitary propositional formula over the
signature consisting of all atomic formulas of the form $p({\bf r})$,
where~$p$ is different from comparison symbols and $\bf r$ is a tuple
of precomputed terms.  Such atomic formulas will be called \emph{extended
precomputed atoms}.  Unlike precomputed atoms, they may contain
predicate symbols~(\ref{newpc}).
If~$\Gamma$ is a set of sentences over~$\sigma_1$ then $gr(\Gamma)$ stands
for the set of formulas $gr(F)$ for all~$F$ in~$\Gamma$.\

The statement of the theorem
refers also to the system $\HT^\infty$ (Section~\ref{sec:infht})
extended by the axioms
$gr(\emph{Defs})$.  These axioms express the meaning of predicate
symbols~(\ref{newpc}).

\begin{theorem} \label{thm1}
  For any pure rule~$R$, $gr(\tau^*(R))$ is equivalent to $\tau(R)$
 in $\HTD$.
\end{theorem}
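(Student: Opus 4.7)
The plan is to prove the theorem in three layers, moving from the overall structure of a pure rule down to its individual body elements.

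\emph{Reducing to closed rules.} By purity of $R$, substituting precomputed terms for its global variables always yields a closed pure rule, and aggregate elements in the body are not disturbed. On the semantic side, $\tau(R)$ is by definition the conjunction of the propositional formulas $\tau(R\sigma)$ over all such substitutions $\sigma$. On the $\tau^*$ side, $\tau^*(R)$ is a universal closure, so $gr$ turns its outer universal quantifiers over global variables into a conjunction over all tuples of precomputed terms, giving $gr(\tau^*(R)) \equiv \bigwedge_\sigma gr(\tau^*(R\sigma))$. It therefore suffices to prove $gr(\tau^*(R')) \lrar \tau(R')$ in $\HTD$ for each closed pure rule~$R'$.

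\emph{Reducing to body elements.} For a closed pure rule, $\tau^*$ and $\tau$ wrap the body the same way. For a basic head $p({\bf t})$, the formula $\val{\bf t}{\bf Z} \to p({\bf Z})$ universally closed over $\bf Z$ and then grounded simplifies (after removing conjuncts where $\val\bft{\bf r}$ grounds to $\bot$) to $gr(B^*_1 \land \cdots \land B^*_n) \to \bigwedge_{{\bf r} \in [{\bf t}]} p({\bf r})$, matching the shape of~(\ref{deftau1}); the choice and constraint cases are analogous. Hence it suffices to show that for each closed body element~$B$, the formula $gr(B^*) \lrar \tau(B)$ is a theorem of~$\HTD$. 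For literals and comparisons this is the translation-correctness statement for $\tau^B$ already known for mini-\gringo. The only genuinely new case is aggregate atoms.

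\emph{Aggregate atoms.} Consider $B = \co\{\bfx:{\bf L}\} \geq t$; the $\leq$ case is symmetric. Its translation $B^*$ is $\exists C(\val tC \land \atl^{\bfx;\bfv}_F(\bfv,C))$ with $F = \exists\bfw\,\tau^B({\bf L})$ (and $\bfv$ substituted out in the closed case). Grounding $\val tC$ pins $C$ to a value of~$t$: if $[t] = \emptyset$ the whole formula reduces to $\bot$, matching $\tau(B)$; if $[t] = \{c\}$, it reduces to $gr(\atl^{\bfx;}_F(c))$, which by axiom~(\ref{def1}) of \emph{Defs} is equivalent in~$\HTD$ to $gr(\exists_{\geq c}\bfx\,F)$. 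Using the mini-\gringo\ correctness statement on the literals and comparisons inside~$\bf L$, one obtains, for every tuple ${\bf r}$ of precomputed terms of the appropriate length, $gr(F^\bfx_{\bf r}) \lrar \bigvee_{\bf w}\tau({\bf L}^{\bfx,\bfw}_{{\bf r},{\bf w}})$. What remains is the combinatorial equivalence, inside $\HT^\infty$, between the existential form of $gr(\exists_{\geq c}\bfx\,F)$---``there exist $n = c$ distinct tuples of precomputed terms each satisfying~$F$''---and the infinite disjunction~(\ref{inf1}) indexed by finite sets~$\Delta$ with $|\Delta| \geq n$.

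\emph{Main obstacle.} This last equivalence is combinatorially obvious but has to be derived carefully in $\HT^\infty$, where infinitary conjunctions and disjunctions do not commute for free. The $\Rightarrow$ direction is easy: let $\Delta$ be the $n$-element set formed from the witnesses. The $\Leftarrow$ direction, given a fixed $\Delta$ with $|\Delta| \geq n$, picks $n$ distinct elements of $\Delta$ and applies $(\lor I)$ to the corresponding disjunct in the existential unfolding; regrouping the resulting big disjunctions of conjunctions uses the distributivity schema~(\ref{dist_cod}). The same template, with the inequality flipped and the existentials replaced by universals, handles~$\atm$. Reassembling the three layers then gives the theorem.
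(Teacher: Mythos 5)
Your proposal is correct and follows essentially the same route as the paper: the paper likewise grounds the universal closure into a conjunction over substitutions for the global variables, dispatches literals and comparisons via the known correctness of $\tau^B$, and isolates your aggregate-atom step as a separate lemma that invokes the grounded \emph{Defs} axioms and then performs exactly the re-indexing of the unfolded existential by finite sets $\Delta$ (with the extra disjuncts for $|\Delta|>n$ absorbed because each implies one with $|\Delta|=n$) that you describe. The only detail you gloss over is the routine case split on the bound $c$ when it is not a positive numeral (non-positive, or greater than every numeral), which the paper treats explicitly.
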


About {\sc mgc} programs~$\Pi_1$,~$\Pi_2$ we say that they are \emph{strongly
  equivalent} to each other if $\tau(\Pi_1)$ is strongly
equivalent to~$\tau(\Pi_2)$.  This condition guarantees that for any
{\sc mgc} program~$\Pi$ (and, more generally, for any logic program~$\Pi$ in
a similar language), $\Pi_1\cup\Pi$ has the same stable models as
$\Pi_2\cup\Pi$.

\begin{theorem}\label{thm2}
  {\sc mgc} programs~$\Pi_1$,~$\Pi_2$ are strongly equivalent to each other
  iff
  \beq
  gr(\tau^*(\Pi_1))\hbox{ is equivalent to }gr(\tau^*(\Pi_2))
  \hbox{ in }\HTD.
  \eeq{thm2a}
\end{theorem}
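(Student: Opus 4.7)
The plan is to combine Theorem~\ref{thm1} with a conservativity argument.

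First I would rewrite both sides of the claimed equivalence. By the definition preceding the theorem, $\Pi_1$ and $\Pi_2$ are strongly equivalent iff $\tau(\Pi_1)$ and $\tau(\Pi_2)$ are strongly equivalent as infinitary propositional formulas, which by the Harrison et al.\ characterization recalled at the end of Section~\ref{sec:infht} is equivalent to derivability of $\tau(\Pi_1)\lrar\tau(\Pi_2)$ in $\HT^\infty$. For the right-hand side, applying Theorem~\ref{thm1} to each rule of $\Pi_i$ and using that $gr$ commutes with conjunction shows that $gr(\tau^*(\Pi_i))$ is equivalent in $\HTD$ to $\tau(\Pi_i)$, so condition~(\ref{thm2a}) is equivalent to derivability of $\tau(\Pi_1)\lrar\tau(\Pi_2)$ in $\HTD$.

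The remaining task is thus a conservativity statement: $\tau(\Pi_1)\lrar\tau(\Pi_2)$ is derivable in $\HT^\infty$ iff it is derivable in $\HTD$. One direction is immediate because $\HTD$ extends $\HT^\infty$. For the converse I would argue semantically, exploiting the fact that $gr(\emph{Defs\/})$ is, in effect, an explicit definition of the new predicates $\atl^{\bfx;\bfv}_F$ and $\atm^{\bfx;\bfv}_F$ in terms of the old signature~$\sigma_0$. Given any HT interpretation $(H,T)$ over precomputed atoms that refutes $\tau(\Pi_1)\lrar\tau(\Pi_2)$, extend it to an interpretation $(H',T')$ over extended precomputed atoms by placing each new atom $\atl^{\bfx;\bfv}_F(\bfv_0,r)$ or $\atm^{\bfx;\bfv}_F(\bfv_0,r)$ into~$H'$ iff $(H,T)$ here-satisfies the grounding of the corresponding right-hand side of~(\ref{def1}) or~(\ref{def2}), and into~$T'$ iff $T$ classically satisfies that grounding. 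The persistence property of HT---here-satisfaction implies classical satisfaction at the ``there'' world---guarantees $H'\subseteq T'$, so $(H',T')$ is a well-formed HT interpretation. By construction it here-satisfies $gr(\emph{Defs\/})$, and it still refutes $\tau(\Pi_1)\lrar\tau(\Pi_2)$ since that formula is over the original signature. A contrapositive then gives the missing direction.

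The main obstacle is the conservativity step: one must verify in detail that $(H',T')$ here-satisfies each grounded axiom from $gr(\emph{Defs\/})$ under the HT semantics, not merely satisfies it classically, tracking the interpretation through the nested infinitary conjunctions and disjunctions produced by expanding $\exists_{\geq r}\bfx F$ and $\exists_{\leq r}\bfx F$. This step also implicitly requires the soundness and completeness of $\HT^\infty$ with respect to HT interpretations, a companion of the Harrison et al.\ characterization used above. Everything else is routine unfolding of definitions and invocation of Theorem~\ref{thm1}.
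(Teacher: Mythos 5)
Your proof is correct, and its skeleton is the same as the paper's: reduce condition~(\ref{thm2a}) to equivalence of $\tau(\Pi_1)$ and $\tau(\Pi_2)$ in $\HTD$ by Theorem~\ref{thm1}, then pass from $\HTD$ down to $\HT^\infty$ by a conservativity claim, and finish with the Harrison et al.\ characterization of strong equivalence. The only genuine difference is how the conservativity step is established. The paper (Lemma~\ref{conserv}) gives a syntactic argument: in a derivation tree from $gr(\emph{Defs\/})$, uniformly replace each atom $\atl^{\bfx;\bfv}_F({\bf v},r)$ or $\atm^{\bfx;\bfv}_F({\bf v},r)$ by its grounded definiens; the $gr(\emph{Defs\/})$ leaves then become trivially provable biconditionals, every inference step is preserved, and a root formula over precomputed atoms is unchanged. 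You instead expand an HT countermodel $(H,T)$ of $\tau(\Pi_1)\lrar\tau(\Pi_2)$ to the extended signature by evaluating each new atom at the ``here'' and ``there'' worlds according to its definition, using persistence to secure $H'\subseteq T'$. This works because the definientia in~(\ref{def1}) and~(\ref{def2}) are formulas over $\sigma_0$ only, so the construction is non-circular; moreover, the verification that $(H',T')$ here-satisfies each grounded biconditional is essentially immediate, since by construction both sides of each biconditional receive identical truth values at each world---the detailed ``tracking through nested infinitary connectives'' you flag as the main obstacle is not actually needed. The trade-off is that your route imports the soundness and completeness of $\HT^\infty$ with respect to HT interpretations (including soundness of derivations from the additional axioms $gr(\emph{Defs\/})$), which is available from Harrison et al.\ but is a heavier tool than the paper's self-contained proof-theoretic substitution. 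Both arguments are legitimate.
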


Thus the claim that MGC programs~$\Pi_1$, $\Pi_2$ are strongly equivalent to
each other can be always established, in principle, by deriving each of
the infinitary propositional formulas $gr(\tau^*(\Pi_1))$,
$gr(\tau^*(\Pi_2))$ from the other in $\HTD$.
Theorem~\ref{thm3} below shows that
in some cases such a claim can be justified by operating with finite
formulas---with first-order formulas of the signature~$\sigma_1$.  Instead
of $\HTD$ we can use the logic of here-and-there with arithmetic
\cite{lif21a} extended by the axiom schemas \emph{Defs\/}:

\begin{theorem} \label{thm3}
  For any {\sc mgc} programs~$\Pi_1$,~$\Pi_2$, if
  the formulas $\tau^*(\Pi_1)$ and $\tau^*(\Pi_2)$ are equivalent in
  $\HTAD$ then~$\Pi_1$ and $\Pi_2$ are strongly equivalent to each
  other.
\end{theorem}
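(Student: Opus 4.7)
The plan is to use Theorem~\ref{thm2} together with a general ``grounding preserves derivability'' lemma. By Theorem~\ref{thm2}, it suffices to show that if $\tau^*(\Pi_1)$ and $\tau^*(\Pi_2)$ are equivalent in~$\HTAD$ then $gr(\tau^*(\Pi_1))$ and $gr(\tau^*(\Pi_2))$ are equivalent in~$\HTD$. Because~$gr$ commutes with the propositional connectives, it is enough to establish the following lemma: for any set~$\G$ of sentences over~$\sigma_1$ and any sentence~$F$ over~$\sigma_1$, if the sequent $\G\seq F$ is derivable in~\HTA\ then $gr(\G)\seq gr(F)$ is derivable in~$\HT^\infty$. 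Applying this lemma with $\G=\emph{Defs}$ and $F=\tau^*(\Pi_1)\lrar\tau^*(\Pi_2)$, and noting that~$gr$ distributes over~$\lrar$, yields the equivalence required by Theorem~\ref{thm2}.

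The lemma is proved by induction on derivations in~\HTA. For each axiom schema, I verify that its grounding is a theorem of~$\HT^\infty$: the schema $F\lor(F\to G)\lor\neg G$ becomes an instance of the same schema since~$gr$ commutes with propositional connectives; identity axioms translate to identity axioms; and the arithmetic axioms of~\HTA\ become infinitary tautologies because~$gr$ evaluates ground arithmetic terms and comparisons according to their standard meanings, so that any valid arithmetic fact reduces after grounding to a conjunction of instances decided by the definition of~$gr$. For the propositional inference rules (introduction and elimination for~$\land$,~$\lor$,~$\to$, together with weakening), each \HTA\ step translates directly to its finitary counterpart in~$\HT^\infty$. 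For the quantifier rules, $\forall$-introduction matches the~$(\land I)$ rule of~$\HT^\infty$ applied to the family of ground instances indexed by precomputed terms (or by numerals in the case of an integer variable), $\forall$-elimination matches~$(\land E)$, and $\exists$-introduction and $\exists$-elimination correspond to~$(\lor I)$ and~$(\lor E)$ respectively.

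The main obstacle will be the quantifier cases, since a single first-order inference step expands under~$gr$ into infinitely many parallel derivations indexed by precomputed terms. For $\forall$-introduction, if an~\HTA\ derivation of $\G\seq F$ uses an eigenvariable~$X$ not free in~$\G$, that derivation can be instantiated uniformly to any precomputed term~$r$, producing an $\HT^\infty$-derivation of $gr(\G)\seq gr(F^X_r)$; collecting these derivations over all~$r$ supplies the premises of the infinitary~$(\land I)$ that yields $gr(\G)\seq gr(\forall X\,F)$. A dual uniform-instantiation argument handles $\exists$-elimination. The sortedness of variables requires a small amount of bookkeeping, since integer quantifiers in~\HTA\ range only over numerals and the corresponding clause of~$gr$ for integer variables conjoins (or disjoins) only over numerals, so the two restrictions match. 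The new predicate symbols~\atl\ and~\atm\ from~$\sigma_1$ cause no difficulty, since they are treated atomically by both~\HTA\ and~$gr$. A version of this grounding lemma was established in the earlier proof that \HTA-equivalence implies strong equivalence for mini-\gringo\ programs, and the extension to the expanded signature~$\sigma_1$ is routine.
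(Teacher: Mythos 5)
Your proposal is correct and follows essentially the same route as the paper: the paper isolates your ``grounding preserves derivability'' claim as Lemma~\ref{lemma-gr} (stated for sentences provable in $\HTAD$ rather than for general sequents, which amounts to the same thing), proves it by the same induction on $\HTA$-derivations checking that groundings of axioms are $\HT^\infty$-theorems and that each inference rule translates to an $\HT^\infty$ step, and then concludes via Theorem~\ref{thm2} exactly as you do. Your discussion of the quantifier cases supplies detail that the paper's proof sketch leaves implicit, but it is the same argument.
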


As an example, we will use~$\HTD$ to verify that
rules~(\ref{g1}),~(\ref{g3}),~(\ref{g3a}) are strongly equivalent to each
other.  The translation~$\tau^*$ transforms these rules into the formulas
\beq
\forall Y\left(
  \exists C\left(C=Y\land\atl^{X;}_{\tau^B(p(X))}(C)\right)
  \land\tau^B(Y=\num 1),
\to q\right),
\eeq{g1f}
\beq
  \exists C\left(C=\num 1\land\atl^{X;}_{\tau^B(p(X))}(C)\right)
\to q,
\eeq{g3f}
\beq
\forall X(\tau^B(p(X))\to q).
\eeq{g3af}
The first two formulas are equivalent to each other
in intuitionistic predicate calculus with equality, which is a
subsystem of $\HTA$; this is clear from the fact that
$\tau^B(Y=\num 1)$ stands for the formula
$\exists Z_1Z_2(Z_1=Y\land Z_2=\num 1\land Z_1=Z_2)$, which is
intuitionistically equivalent to $Y=\num 1$.  Furthermore,~(\ref{g3f}) is
intuitionistically equivalent to
\beq
\atl^{X;}_{\tau^B(p(X))}(\num 1)
\to q.
\eeq{g3fb}
Using the axiom
$$\atl^{X;}_{\tau^B(p(X))}(\num 1)\lrar
  \exists X \tau^B(p(X))$$
of $\HTAD$,~(\ref{g3fb}) can be transformed into the formula
$\exists X\,\tau^B(p(X))\to q$,
which is intuitionistically equivalent to~(\ref{g3af}).

\section{Proofs} \label{sec:proofs}

In this section, the word ``equivalent'' in application to infinitary
propositional formulas refers to equivalence in $\HT^\infty$ whenever the
deductive system is not specified.

\begin{lemma} \label{prop1}
  For any tuple $\bft$ of terms in the language of mini-\gringo\ and any
  tuple~$\bf r$ of precomputed terms of the same length, the formula
  $gr(val_\bft({\bf r}))$ is provable in $\HT^\infty$ if
  ${\bf r}\in[\bft]$, and refutable otherwise.
\end{lemma}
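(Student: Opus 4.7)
The plan is a structural induction on mini-\gringo\ terms, preceded by a reduction from tuples to singletons. Since $\val{\bft}{\bf r}$ abbreviates the conjunction $\val{t_1}{r_1}\land\cdots\land\val{t_n}{r_n}$ and the grounding operator commutes with $\land$, and since ${\bf r}\in[\bft]$ iff $r_i\in[t_i]$ for every $i$, the tuple case follows from the single-term case by componentwise application of the conjunction introduction and elimination rules of $\HT^\infty$: if every $gr(\val{t_i}{r_i})$ is provable we conjoin them, and if any one is refutable then so is the whole conjunction.

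For a single ground term $t$ and a precomputed term $r$ I induct on the structure of $t$. In the base case $t$ is already a precomputed term, so $\val{t}{Z}$ is $Z=t$, and the $gr$ clause for comparison symbols evaluates the ground equality $r=t$ to $\top$ exactly when $r=t$, which is exactly the condition $r\in[t]=\{t\}$, and to $\bot$ otherwise.

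For a compound term $t_1\oplus t_2$ the formula $\val{t}{Z}$ has the shape $\exists IJ\bigl(\val{t_1}{I}\land\val{t_2}{J}\land\varphi(I,J,Z)\bigr)$, where $I,J$ are integer variables and $\varphi$ encodes the arithmetic relation as an equality (three existentials with an analogous side condition for the interval constructor $..\,$). Grounding turns the existentials into a disjunction indexed by numerals, so
\[
gr(\val{t}{r})\ =\ \bigvee_{m,n}\Bigl(gr(\val{t_1}{\num m})\land gr(\val{t_2}{\num n})\land gr(\varphi(\num m,\num n,r))\Bigr),
\]
with $m,n$ ranging over integers. By the induction hypothesis the first two conjuncts are provable iff $\num m\in[t_1]$ and $\num n\in[t_2]$, and refutable otherwise; and $gr(\varphi(\num m,\num n,r))$ is $\top$ exactly when the concrete arithmetic relation holds between the numerals, and $\bot$ otherwise, in particular whenever $r$ is not a numeral. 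Since the recursive definition of $[t_1\oplus t_2]$ says precisely that $r\in[t]$ iff some such pair of witnesses exists, a single disjunct proves $gr(\val{t}{r})$ via $(\land I)$ and $(\lor I)$ when $r\in[t]$, and every disjunct contains a refutable conjunct when $r\notin[t]$, making the whole disjunction refutable by $(\lor E)$.

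The main obstacle is case-by-case bookkeeping rather than any deep step: one must verify that for $/$ and $\backslash$ the side condition guarding a nonzero divisor makes all disjuncts refutable when $\num 0\in[t_2]$; that for the interval constructor the three-existential unfolding matches the recursive description of $[t_1\,..\,t_2]$; and that for arithmetic terms a non-numeral $r$ forces every $gr(\varphi(\num m,\num n,r))$ to reduce to $\bot$, since $gr$ evaluates the compound arithmetic expression to a numeral which is by assumption different from $r$.
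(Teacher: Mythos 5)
Your proposal is correct and takes essentially the same route as the paper: the paper reduces the tuple case to the single-term case and handles the latter by structural induction on the term, simply citing Proposition~1 of Lifschitz et al.\ (2019) for that induction rather than carrying it out. Your write-up just inlines the details of that cited induction (base case for precomputed terms, disjunction-over-numerals analysis for compound terms), so there is no substantive difference in approach.
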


\begin{proof}
  For the case when $\bft$ is a single term,
the assertion of the lemma can be proved by induction
\cite[Proposition~1]{lif19}.  The general case easily follows.
\end{proof}

The following fact is Proposition~2 by Lifschitz et al.~\citeyear{lif19}.

\begin{lemma}\label{prop2}
  If~$L$ is a ground literal or ground comparison in the language of
  mini-\gringo\ then $gr(\tau^B(L))$ is equivalent to~$\tau(L)$.
\end{lemma}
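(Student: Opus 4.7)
The plan is to do a case analysis on the form of $L$, matching the four clauses in the definition of $\tau^B$ (positive literal, single negation, double negation, comparison) against the corresponding clauses in the definition of $\tau$. In each case the strategy is the same: push $gr$ through the existential quantifier(s) in $\tau^B(L)$ to obtain an infinite disjunction indexed by tuples of precomputed terms, and then use Lemma~\ref{prop1} to eliminate the disjuncts whose $val$-conjuncts are refutable and collapse the ones whose $val$-conjuncts are provable.

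First I would handle $L = p(\bft)$. Here $\tau^B(L)$ is $\exists \bfx (\val\bft\bfx \land p(\bfx))$, where $\bfx$ is a tuple of fresh general variables. Applying the clauses of $gr$ for $\exists$, $\land$, and $p(\bfx)$, one obtains $\bigvee_\bfr (gr(\val\bft\bfr) \land p(\bfr))$, where $\bfr$ ranges over tuples of precomputed terms of the appropriate length. By Lemma~\ref{prop1}, each disjunct with $\bfr \notin [\bft]$ is equivalent to $\bot$ and can be dropped, while each disjunct with $\bfr \in [\bft]$ is equivalent to $p(\bfr)$. What remains is $\bigvee_{\bfr \in [\bft]} p(\bfr)$, which is exactly $\tau(p(\bft))$. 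The cases $L = \no\, p(\bft)$ and $L = \no\,\no\, p(\bft)$ are entirely analogous; the only change is that $p(\bfr)$ is replaced by $\neg p(\bfr)$ or $\neg\neg p(\bfr)$ in both the grounded translation and the target.

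For the comparison case $L = (t_1 \prec t_2)$, I would expand $gr(\tau^B(L))$ to the double disjunction $\bigvee_{r_1}\bigvee_{r_2}\bigl(gr(\val{t_1}{r_1}) \land gr(\val{t_2}{r_2}) \land gr(r_1 \prec r_2)\bigr)$, where $r_1, r_2$ range over precomputed terms. By Lemma~\ref{prop1}, the surviving disjuncts are precisely those with $r_1 \in [t_1]$ and $r_2 \in [t_2]$, and for these the remaining conjunct $gr(r_1 \prec r_2)$ is (by the definition of $gr$ on comparison atoms) $\top$ when $\prec$ holds between $r_1$ and $r_2$ and $\bot$ otherwise. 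Hence the whole formula is equivalent to $\top$ if some pair $(r_1, r_2) \in [t_1]\times[t_2]$ satisfies $\prec$, and to $\bot$ otherwise, matching the definition of $\tau(t_1 \prec t_2)$.

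No step is genuinely hard; the only thing requiring care is being explicit that $gr$ distributes over $\land$ and converts the existentials into disjunctions over precomputed terms (which is immediate from the recursive definition of $gr$, since the variables in $\bfx$, $Z_1$, $Z_2$ are of sort \emph{general}), and that the equivalences obtained by Lemma~\ref{prop1} can be combined inside $\HT^\infty$ with standard equivalences for $\top$, $\bot$, and distributivity of $\land$ over the infinite disjunctions. This last bookkeeping is the closest thing to an obstacle, but it is routine in $\HT^\infty$ given the distributivity axiom~(\ref{dist_cod}).
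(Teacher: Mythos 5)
Your proof is correct, but it is worth pointing out that the paper does not actually prove this lemma at all: it is imported verbatim as Proposition~2 of Lifschitz, L\"uhne, and Schaub (2019), with no argument given in the present text. Your case analysis is the natural self-contained derivation of that cited fact, and each step checks out: $gr$ turns the existential quantifiers over general variables in $\tau^B(L)$ into infinite disjunctions over (tuples of) precomputed terms, Lemma~\ref{prop1} sorts the disjuncts into those whose $\val{\bft}{\bf r}$-conjunct is provable (${\bf r}\in[\bft]$) and those where it is refutable, and the surviving disjuncts reassemble into exactly the clause of $\tau$ for the corresponding form of~$L$. The one point you gloss over slightly is the justification for simplifying \emph{inside} the infinite disjunction --- you need replacement of provably equivalent subformulas in $\HT^\infty$, which holds but is a property of the connective rules rather than of the distributivity axiom~(\ref{dist_cod}) you invoke; the distributivity axiom is not actually needed here. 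Your approach buys self-containedness at the cost of a page of routine bookkeeping that the paper avoids by citation; structurally it mirrors how the same paper handles Lemma~\ref{prop1}, whose proof is likewise deferred to the earlier work.
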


\begin{lemma} \label{newlemma}
  Let $\bfx$, $\bfv$, $\bfw$ be disjoint lists of distinct general variables,
  and let~$A$ be an aggregate atom
  $count\{\bfx:{\bf L}\} \prec t$
such that every variable occurring in~$\bf L$ belongs to one of these three
lists, and every variable occurring in~$t$ belongs to~$\bfv$.  For any
list $\bf v$ of precomputed terms of the same length as~$\bfv$, the
formula $\tau\left(A^\bfv_{\bf v}\right)$ is equivalent in $\HTD$ to 
\beq
gr\left(\exists C\left(val_{t^\bfv_{\bf v}}(C) \land
    Atleast^{\bfx;\bfv}_{\exists\bfw\tau^B({\bf L})}({\bf v},C)\right)\right)
\eeq{new1}
if~$\prec$ is $\geq$, and to
\beq
gr\left(\exists C\left(val_{t^\bfv_{\bf v}}(C) \land
    Atmost^{\bfx;\bfv}_{\exists\bfw\tau^B({\bf L})}({\bf v},C)\right)\right)
\eeq{new2}
if~$\prec$ is $\leq$.
\end{lemma}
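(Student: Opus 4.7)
I would handle the case where $\prec$ is $\geq$; the $\leq$ case is completely parallel, with formulas~(\ref{inf2}) and (\ref{atmn}) replacing~(\ref{inf1}) and~(\ref{atln}). The first step is to push the grounding operator through the existential in~(\ref{new1}), turning it into a disjunction over all precomputed terms $r$ of $gr(\val{t^\bfv_{\bf v}}{r}) \land gr(\atl^{\bfx;\bfv}_{\exists \bfw \tau^B({\bf L})}({\bf v}, r))$. Because $t^\bfv_{\bf v}$ is interval-free, Lemma~\ref{prop1} makes the first conjunct provable exactly when $r \in [t^\bfv_{\bf v}]$ and refutable otherwise. Hence in $\HT^\infty$ the whole disjunction collapses to $\bot$ when $[t^\bfv_{\bf v}] = \emptyset$---matching the clause in the definition of $\tau$ that yields $\tau(A^\bfv_{\bf v}) = \bot$---and to $gr(\atl^{\bfx;\bfv}_{\exists \bfw \tau^B({\bf L})}({\bf v}, c))$ when $[t^\bfv_{\bf v}] = \{c\}$.

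Next I would apply the instance of \emph{Defs} given by~(\ref{def1}) with $F = \exists \bfw \tau^B({\bf L})$ and last parameter~$c$; after grounding $\bfv$ to ${\bf v}$, this establishes in $\HTD$ the equivalence of that atom with $gr((\exists_{\geq c} \bfx\, \exists \bfw \tau^B({\bf L}))^\bfv_{\bf v})$. Three subcases then arise from the definition of $\exists_{\geq c}$. When $c \leq \num 0$ the target is $\top$, which matches~(\ref{inf1}) because the index $\Delta = \emptyset$ is admissible and its inner conjunction is empty. When $c$ exceeds every numeral, the target is $\bot$ and~(\ref{inf1}) is the empty disjunction, since no numeral $\num{|\Delta|}$ can satisfy $\num{|\Delta|} \geq c$. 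In the main subcase $c = \num n$ with $n > 0$, grounding~(\ref{atln}) yields a disjunction over all $n$-tuples $({\bf x}_1, \ldots, {\bf x}_n)$ of precomputed terms in which the ground inequality factors evaluate to $\top$ exactly when the entries are pairwise distinct. Applying Lemma~\ref{prop2}---extended from single literals and comparisons to conjunctions of them by a routine induction, using the fact that $gr \circ \tau^B$ and $\tau$ both distribute over conjunction---turns each surviving disjunct into $\bigwedge_{i=1}^n \bigvee_{\bf w} \tau({\bf L}^{\bfx, \bfv, \bfw}_{{\bf x}_i, {\bf v}, {\bf w}})$.

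What remains is a combinatorial rearrangement in $\HT^\infty$: the disjunction indexed by ordered $n$-tuples of pairwise distinct precomputed terms must be shown equivalent to the disjunction in~(\ref{inf1}) indexed by finite sets $\Delta$ with $|\Delta| \geq n$. Each such $n$-tuple determines a set of cardinality exactly $n$ whose conjunction coincides with the tuple's, giving one direction; conversely, every $\Delta$ with $|\Delta| \geq n$ contains an $n$-element subset whose conjunction is entailed by the conjunction over $\Delta$ and matches a tuple disjunct. I expect this combinatorial step, together with the careful bookkeeping required to track the three disjoint variable lists $\bfx$, $\bfv$, $\bfw$ through the alternating substitutions and groundings, to be the main source of difficulty; everything else is a methodical unfolding of definitions using the two preceding lemmas and the relevant axiom of $gr(\emph{Defs})$.
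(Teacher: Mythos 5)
Your proposal follows essentially the same route as the paper's own proof: push $gr$ through the existential, use Lemma~\ref{prop1} to collapse the disjunction to the value(s) of $t^\bfv_{\bf v}$, invoke the grounded instance of~(\ref{def1}), split on the three cases for $c$, evaluate the distinctness conjuncts, apply Lemma~\ref{prop2} (which the paper, like you, tacitly extends to conjunctions of literals and comparisons), and finish with the combinatorial collapse of the $|\Delta|\geq n$ disjunction onto the $|\Delta|=n$ terms. The argument is correct and complete in the same sense as the published one.
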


\begin{proof}
\emph{Case~1:} $\prec$ is $\geq$.
Formula~(\ref{new1}) can be written as
$$\bigvee_c \left(gr(\val{t^\bfv_{\bf v}}c) \land
  \atl^{\bfx;\bfv}_{\exists\bfw\tau^B({\bf L})}({\bf v},c)\right),$$
where~$c$ ranges over precomputed terms.  From Lemma~\ref{prop1} we
see that it is equivalent to
$$\bigvee_{c\in\left[t^\bfv_{\bf v}\right]} 
  \atl^{\bfx;\bfv}_{\exists\bfw\tau^B({\bf L})}({\bf v},c).$$
Consider the infinitary formula obtained by grounding~(\ref{def1})
with $\exists\bfw\tau^B({\bf L})$ as~$F$.  One of its conjunctive terms is
$$\atl^{\bfx;\bfv}_{\exists\bfw\tau^B({\bf L})}({\bf v},c)\lrar
gr\left(\left(\exists_{\geq c} \bfx \exists\bfw\tau^B({\bf L})\right)^\bfv_{\bf v}\right).$$
Consequently~(\ref{new1}) is equivalent in $\HTD$ to
\beq
\bigvee_{c\in\left[t^\bfv_{\bf v}\right]} 
gr\left(\left(\exists_{\geq c} \bfx
    \exists\bfw\tau^B({\bf L})\right)^\bfv_{\bf v}\right).
\eeq{dis}

\emph{Case~1.1:} The set $\left[t^\bfv_{\bf v}\right]$ is empty.
Then~(\ref{dis}) is the
empty disjunction~$\bot$; $\tau\left(A^\bfv_{\bf v}\right)$ is~$\bot$ as
well.  \emph{Case~1.2:} The set $\left[t^\bfv_{\bf v}\right]$ is non-empty.
Since~$t$ is
interval-free, this set is a singleton~$\{c\}$, so that~(\ref{dis}) is
\beq
gr\left(\left(\exists_{\geq c} \bfx \exists\bfw\tau^B({\bf L})\right)^\bfv_{\bf v}\right)
\eeq{nondis}
and $\tau\left(A^\bfv_{\bf v}\right)$ is
\beq
\bigvee_{\Delta\,:\,\num{|\Delta|} \geq c}\;\bigwedge_{{\bf x}\in\Delta}\;\bigvee_{\bf w}
   \tau\!\left({\bf L}^{\bfx,\bfv,\bfw}_{\,{\bf x},\;{\bf v},\;{\bf w}}\right).
\eeq{tau1}
\emph{Case~1.2.1:} $c\leq\num 0$.  Then~(\ref{nondis}) is~$\top$.  The
disjunctive term of~(\ref{tau1}) with $\Delta=\emptyset$ is the empty
conjunction~$\top$, so that~(\ref{tau1}) is equivalent to~$\top$.
\emph{Case~1.2.2:} for all~$n$, $c>\num n$.  Then~(\ref{nondis}) is~$\bot$.
Formula~(\ref{tau1}) is the empty disjunction~$\bot$ as well.
\emph{Case~1.2.3:} $c$ is a numeral~$\num n$, $n>0$.  Then~(\ref{nondis}) is
$$gr\left(\exists\bfx_1\cdots\bfx_n\left(\bigwedge_{i=1}^n
    \exists\bfw\left(\tau^B({\bf L})^{\bfx,\bfv}_{\bfx_i,{\bf v}}\right)
    \land\bigwedge_{i<j}\neg(\bfx_i=\bfx_j)\right)\right),$$
This formula can be rewritten as
$$\bigvee_{{\bf x}_1,\dots,{\bf x}_n}\left(\bigwedge_{i=1}^n
  \bigvee_{\bf w}gr\left((\tau^B{\bf L})^{\bfx,\bfv,\bfw}_{{\bf x}_i,{\bf v},{\bf w}}\right)
  \land\bigwedge_{i<j}\neg gr({\bf x}_i={\bf x}_j)\right)$$
(${\bf x}_1,\dots{\bf x}_n,{\bf w}$ range over tuples of precomputed terms).
The part $\bigwedge_{i<j}\neg(gr({\bf x}_i={\bf x}_j))$
is equivalent to~$\top$ if the tuples ${\bf x}_1,\dots{\bf x}_n$
are pairwise distinct, that is to say, if the cardinality of the set
$\{{\bf x}_1,\dots{\bf x}_n\}$ is~$n$; otherwise this conjunction is
equivalent to~$\bot$.  Consequently (\ref{nondis}) is equivalent to
$$\bigvee_{\Delta\,:\,|\Delta|=n}\,\bigwedge_{{\bf x}\in\Delta}
\bigvee_{\bf w}gr
\left((\tau^B{\bf L})^{\bfx,\bfv,\bfw}_{{\bf x},\;{\bf v},\;{\bf w}}\right).$$
The formula
$gr\!\left((\tau^B{\bf L})^{\bfx,\bfv,\bfw}_{{\bf x},\;{\bf v},\;{\bf w}}\right)$
can be rewritten as
$gr\!\left(\tau^B\left({\bf L}^{\bfx,\bfv,\bfw}
    _{{\bf x},\;{\bf v},\;{\bf w}}\right)\right)$.
By Lemma~\ref{prop2}, it is equivalent to
$\tau\left({\bf L}^{\bfx,\bfv,\bfw}_{{\bf x},\;{\bf v},\;{\bf w}}\right)$,
so that~(\ref{nondis}) is equivalent to
\beq
\bigvee_{\Delta\,:\,|\Delta| = n}\;\bigwedge_{{\bf x}\in\Delta}\;\bigvee_{\bf w}
   \tau\!\left({\bf L}^{\bfx,\bfv,\bfw}_{\,{\bf x},\;{\bf v},\;{\bf w}}\right).
   \eeq{e1}
Disjunction~(\ref{tau1}) can be obtained from this disjunction by adding
similar disjunctive terms with sets~$\Delta$ containing more than~$n$
tuples.  Since each of these disjunctive terms is stronger than some of the
disjunctive terms in~(\ref{e1}), the two disjunctions are equivalent.

\emph{Case~2:} $\prec$ is $\leq$.
Formula~(\ref{new2}) is equivalent in $\HTD$ to
\beq
\bigvee_{c\in\left[t^\bfv_{\bf v}\right]} 
gr\left(\left(\exists_{\leq c} \bfx
    \exists\bfw\tau^B({\bf L})\right)^\bfv_{\bf v}\right);
\eeq{disa}
this is parallel to the argument in Case~1.

\emph{Case~2.1:} The set $\left[t^\bfv_{\bf v}\right]$ is empty.
Then~(\ref{disa}) is the
empty disjunction~$\bot$; $\tau\left(A^\bfv_{\bf v}\right)$ is~$\bot$ as
well.  \emph{Case~2.2:} The set $\left[t^\bfv_{\bf v}\right]$ is non-empty.
Since~$t$ is
interval-free, this set is a singleton~$\{c\}$, so that~(\ref{disa}) is
\beq
gr\left(\left(\exists_{\leq c} \bfx \exists\bfw\tau^B({\bf L})\right)^\bfv_{\bf v}\right)
\eeq{nondisa}
and $\tau\left(A^\bfv_{\bf v}\right)$ is
\beq
\bigwedge_{\Delta\,:\,\num{|\Delta|}>c}\;
\neg\bigwedge_{{\bf x}\in\Delta}\;\bigvee_{\bf w}
   \tau\!\left({\bf L}^{\bfx,\bfv,\bfw}_{\,{\bf x},\;{\bf v},\;{\bf w}}\right).
\eeq{tau1a}
\emph{Case~2.2.1:} $c<\num 0$.  Then~(\ref{nondisa}) is~$\bot$.  The
conjunctive term of~(\ref{tau1a}) with $\Delta=\emptyset$ is $\neg\top$,
so that~(\ref{tau1a}) is equivalent to~$\bot$.
\emph{Case~2.2.2:} for all~$n$, $c>\num n$.  Then~(\ref{nondisa}) is~$\top$.
Formula~(\ref{tau1a}) is the empty conjunction~$\top$ as well.
\emph{Case~2.2.3:} $c$ is a numeral~$\num n$, $n>0$.  Then~(\ref{nondisa}) is
$$gr\left(\forall\bfx_1\cdots\bfx_{n+1}\left(\bigwedge_{i=1}^{n+1}
    \exists\bfw\left(\tau^B({\bf L})^{\bfx,\bfv}_{\bfx_i,{\bf v}}\right)
    \to\bigvee_{i<j}\bfx_i=\bfx_j\right)\right).$$
This formula can be rewritten as
$$\bigwedge_{{\bf x}_1,\dots,{\bf x}_{n+1}}\left(\bigwedge_{i=1}^{n+1}
  \bigvee_{\bf w}gr\left((\tau^B{\bf L})^{\bfx,\bfv,\bfw}_{{\bf x}_i,{\bf v},{\bf w}}\right)
  \to\bigvee_{i<j} gr({\bf x}_i={\bf x}_j)\right).$$
The consequent $\bigvee_{i<j}gr({\bf x}_i={\bf x}_j))$
is equivalent to~$\bot$ if the tuples ${\bf x}_1,\dots{\bf x}_{n+1}$
are pairwise distinct, that is to say, if the cardinality of the set
$\{{\bf x}_1,\dots{\bf x}_{n+1}\}$ is~$n+1$; otherwise this conjunction is
equivalent to~$\top$.  Consequently (\ref{nondisa}) is equivalent to
$$\bigwedge_{\Delta\,:\,|\Delta|=n+1}\neg\bigwedge_{{\bf x}\in\Delta}
\bigvee_{\bf w}gr
\left((\tau^B{\bf L})^{\bfx,\bfv,\bfw}_{{\bf x},\;{\bf v},\;{\bf w}}\right)$$
and furthermore to
\beq
\bigwedge_{\Delta\,:\,|\Delta| = n+1}\neg\bigwedge_{{\bf x}\in\Delta}\;\bigvee_{\bf w}
   \tau\!\left({\bf L}^{\bfx,\bfv,\bfw}_{\,{\bf x},\;{\bf v},\;{\bf w}}\right);
   \eeq{e1a}
this is parallel to the argument in Case~1.2.3.
Conjunction~(\ref{tau1a}) can be obtained from this conjunction by adding
similar conjunctive terms with finite sets~$\Delta$ containing more than~$n+1$
tuples.  Since each of these conjunctive terms is weaker than some of the
conjunctive terms in~(\ref{e1a}), the two conjunctions are equivalent
to each other.
\end{proof}

\noindent\emph{Proof of Theorem~\ref{thm1}}

\noindent
Assume, for instance, that~$R$ is a basic rule
$p(\bft) \ar B_1\land\cdots\land B_n$;
for choice rules and constraints the proof is similar. Then~$\tau^*(R)$ is
$$\forall\bfv{\bf Z}
(B^*_1\land\cdots\land B^*_n\land\val{\bf t}{\bf Z}\to p({\bf Z})),$$
where $\bfv$ is the list of global variables of~$R$, and
$B_i^*$ are the formulas defined in Section~\ref{sec:trans}.  It
follows that $gr(\tau^*(R))$ is the conjunction of the formulas
$$gr\left((B^*_1)^\bfv_{\bf v}\right)\land\cdots\land
  gr\left((B^*_n)^\bfv_{\bf v}\right)
\land gr\left(\val{\bf t^\bfv_{\bf v}}{\bf r}\right)
\to p({\bf r})$$
over all tuples $\bf v$ of precomputed terms of the same length as~${\bf V}$
and all tuples $\bf r$ of precomputed terms of the same length as~${\bf Z}$.
By Lemma~\ref{prop1}, we can conclude that $gr(\tau^*(R))$ is equivalent
to the formula
$$\bigwedge_{\bf v}
\left(\left(gr\left((B^*_1)^\bfv_{\bf v}\right)\land\cdots\land
  gr\left((B^*_n)^\bfv_{\bf v}\right)\right)
\to \bigwedge_{{\bf r}\in \left[{\bf t}^\bfv_{\bf v}\right]} p({\bf r})\right).
$$
Each of the formulas
\beq
gr\left((B^*_i)^\bfv_{\bf v}\right)
\eeq{eqi}
$(i=1,\dots,n)$ is equivalent in $\HTD$ to
$\tau\left((B_i)^\bfv_{\bf v}\right)$.
Indeed, if~$B_i$ is a literal or a comparison then~(\ref{eqi})
is $gr\left(\left(\tau^B(B_i)\right)^\bfv_{\bf v}\right)$,
which can be also written as
$gr\left(\tau^B\left((B_i)^\bfv_{\bf v}\right)\right)$;
this formula is equivalent in $\HTD$ to
$\tau\left((B_i)^\bfv_{\bf v}\right)$
by Lemma~\ref{prop2}.
If~$B_i$ is an aggregate atom then~(\ref{eqi}) is equivalent in $\HTD$ to
$\tau\left((B_i)^\bfv_{\bf v}\right)$ by Lemma~\ref{newlemma}.
Consequently $gr(\tau^*(R))$ is equivalent in $\HTD$ to
\beq
\bigwedge_{\bf v}
\left(\tau\left((B_1)^\bfv_{\bf v}\right)\land\cdots\land
  \tau\left((B_n)^\bfv_{\bf v}\right)
\to \bigwedge_{{\bf r}\in \left[{\bf t}^\bfv_{\bf v}\right]} p({\bf r})\right).
\eeq{athere}
It remains to observe that instances of~$R$ are rules of the form
$$
p(\bft^\bfv_{\bf v}) \ar (B_1)^\bfv_{\bf v}\land\cdots\land (B_n)^\bfv_{\bf v},
$$
so that~(\ref{athere}) is~$\tau(R)$.

\begin{lemma}\label{conserv}
  If an infinitary propositional formula over the set of precomputed
  atoms is provable in $\HTD$ then it is provable in~$\HT^\infty$.
\end{lemma}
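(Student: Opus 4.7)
The natural approach is to exhibit a conservativity translation~$\sigma$ on infinitary propositional formulas that eliminates the auxiliary predicate symbols~(\ref{newpc}) in favor of the formulas they abbreviate. Because the subscript~$F$ in $\atl^{\bfx;\bfv}_F$ and $\atm^{\bfx;\bfv}_F$ is required to be a formula over the base signature~$\sigma_0$, unfolding a single occurrence of such an atom produces a formula that again contains no symbols from~(\ref{newpc}); iteration is therefore unnecessary. Define~$\sigma$ recursively on the structure of infinitary propositional formulas by setting $\sigma(\atl^{\bfx;\bfv}_F({\bf v},r))=gr((\exists_{\geq r}\bfx\,F)^{\bfv}_{\,\bf v})$ and $\sigma(\atm^{\bfx;\bfv}_F({\bf v},r))=gr((\exists_{\leq r}\bfx\,F)^{\bfv}_{\,\bf v})$, taking $\sigma$ to be the identity on every precomputed atom and on~$\bot$, and requiring $\sigma$ to commute with~$\to$ and with arbitrary conjunctions and disjunctions.

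Three facts then suffice. First, $\sigma(F)=F$ whenever $F$ is an infinitary formula over the set of precomputed atoms; this is immediate from the definition. Second, for each grounded instance~$G$ of a sentence from \emph{Defs}, the formula~$\sigma(G)$ is a biconditional of the shape $\varphi\lrar\varphi$, hence a theorem of~$\HT^\infty$. Third, $\sigma$ preserves $\HT^\infty$-derivability: if $\G\seq F$ is derivable in~$\HT^\infty$, then so is $\sigma(\G)\seq\sigma(F)$, where $\sigma$ is applied element-wise to~$\G$.

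Granting these, the conclusion is one line. A proof in~$\HTD$ is by definition a proof in~$\HT^\infty$ from the assumption set $gr(\emph{Defs\/})$; applying~$\sigma$ throughout yields a proof in~$\HT^\infty$ from $\sigma(gr(\emph{Defs\/}))$, each of whose members is an $\HT^\infty$-theorem by the second fact, so cut and weakening give an $\HT^\infty$-proof of~$\sigma(F)$, which by the first fact is $F$ itself.

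The main obstacle is the third fact. The axiom schema $F\seq F$ and the excluded-middle schema $F\lor(F\to G)\lor\neg G$ are preserved because $\sigma$ acts as a uniform substitution of atoms by formulas, and the distributivity schema~(\ref{dist_cod}) is preserved because~$\sigma$ commutes with the indexed families of conjunctions and disjunctions displayed there. Each inference rule of~$\HT^\infty$ manipulates only the outer logical structure of its sequents---introducing or eliminating a connective, or adjoining assumptions---so a derivation can be transformed into a $\sigma$-image by applying~$\sigma$ to every formula in every sequent. The verification is routine but has to be carried out rule by rule, and this is the step where care is most needed.
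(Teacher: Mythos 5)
Your proposal is correct and follows essentially the same route as the paper: the paper's proof sketch likewise replaces every occurrence of $\atl^{\bfx;\bfv}_F({\bf v},r)$ and $\atm^{\bfx;\bfv}_F({\bf v},r)$ throughout the derivation tree by the corresponding grounded formula $gr\left(\exists_{\geq r}\bfx\,F^{\bfv}_{\,\bf v}\right)$ or $gr\left(\exists_{\leq r}\bfx\,F^{\bfv}_{\,\bf v}\right)$, observes that the images of the $gr(\emph{Defs\/})$ axioms become trivial biconditionals, and notes that the root formula is unaffected since it contains no auxiliary atoms. Your additional remark that no iteration is needed because the subscript $F$ is over $\sigma_0$ is a useful explicit observation that the paper leaves implicit (and your replacement correctly pairs \emph{Atleast} with $\exists_{\geq r}$, where the paper's sketch contains an apparent typographical swap).
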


\noindent\emph{Proof (sketch)}

\noindent
The set $gr(\emph{Defs})$ consists of infinitary propositional
formulas of the forms
\beq
\bigwedge_{\bf v}\left(\atl^{\bfx;\bfv}_F({\bf v},r)\lrar
  gr\left(\exists_{\geq r} \bfx F^{\bf V}_{\bf v}\right)\right)
\eeq{def1gr}
and
\beq
\bigwedge_{\bf v}\left(\atm^{\bfx;\bfv}_F({\bf v},r)\lrar
  gr\left(\exists_{\leq r} \bfx F^{\bf V}_{\bf v}\right)\right).
\eeq{def2gr}
A derivation from $gr(\emph{Defs})$ can be visualized as a tree with axioms
of $\HT^\infty$ and formulas~(\ref{def1gr}),~(\ref{def2gr}), attached to
leaves.  In such a tree, modify all formulas by replacing
\begin{itemize}
\item
atoms $\atl^{\bfx;\bfv}_F({\bf v},r)$
  by $gr\left(\exists_{\leq r}\bfx F^{\bfv}_{\bf v}\right)$, and
\item
atoms $\atm^{\bfx;\bfv}_F({\bf v},r)$
  by $gr\left(\exists_{\geq r}\bfx F^{\bfv}_{\bf v}\right)$.
\end{itemize}
The result is a derivation from formulas that are provable in $\HT^\infty$.
If the formula attached to the root does not contain
$\atl^{\bfx;\bfv}_F$, $\atm^{\bfx;\bfv}_F$ then it is not affected by this
transformation.

\medskip\noindent\emph{Proof of Theorem~\ref{thm2}}

\noindent
By Theorem~\ref{thm1}, each of the equivalences
$$gr(\tau^*(\Pi_1)) \lrar \tau(\Pi_1),\ gr(\tau^*(\Pi_2)) \lrar \tau(\Pi_2)$$
is provable in $\HTD$.  Consequently
condition~(\ref{thm2a}) is equivalent to the condition
\beq
  \tau(\Pi_1)\hbox{ is equivalent to }\tau(\Pi_2)\hbox{ in }\HTD.
\eeq{thm2b}
By Lemma~\ref{conserv},~(\ref{thm2b}) is equivalent to the condition
$$\tau(\Pi_1)\lrar\tau(\Pi_2)\hbox{ is provable in }\HT^\infty,$$
which holds if and only if $\Pi_1$ is strongly equivalent to~$\Pi_2$.

\begin{lemma}\label{lemma-gr}
  If a sentence~$F$ over the signature~$\sigma_1$ is provable in~$\HTAD$ then
  $gr(F)$ is provable in~$\HTD$.
\end{lemma}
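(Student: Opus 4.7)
The plan is to argue by induction on the length of a derivation of~$F$ in $\HTAD$. The analogous conservativity statement for $\HTA$ over $\HT^\infty$---namely, that $gr$ sends any $\HTA$-provable sentence to an $\HT^\infty$-provable infinitary formula---should already be available from the work on mini-\gringo\ without aggregates \cite{lif21a}. The essential task here is to show that adding the axiom schemas \emph{Defs} on the source side corresponds exactly to adding the set $gr(\emph{Defs})$ on the target side, which is precisely what defines~$\HTD$.

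For the inductive step, I would examine each way a step in an $\HTAD$-derivation can arise. If it is an axiom of $\HTA$ or an application of an $\HTA$ inference rule, the underlying $\HTA$-versus-$\HT^\infty$ lemma gives the conclusion. If it is an instance of \emph{Defs}, it has one of the forms~(\ref{def1}),~(\ref{def2}); in either case, unwinding the definition of $gr$---in particular, using that $gr(\forall \bfv\,\cdot)$ produces the conjunction over all tuples~$\bf v$ of precomputed terms---shows that the grounding of such an instance is an element of $gr(\emph{Defs})$, and hence an axiom of $\HTD$. Combining these, every line of the $\HTAD$-derivation grounds to a formula provable in~$\HTD$.

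The main obstacle is really packaged inside the underlying $\HTA$-to-$\HT^\infty$ lemma: checking that the quantifier rules of $\HTA$, together with the axioms distinguishing integer-sort from general-sort variables, are correctly mimicked after grounding by applications of $(\land I)$, $(\land E)$, $(\lor I)$, $(\lor E)$ over the relevant index sets of precomputed terms and numerals, and that $gr$ commutes with the substitutions appearing in a first-order proof. No new difficulty is introduced by the predicate constants~(\ref{newpc}): they occur in an $\HTAD$-proof only through the axioms of \emph{Defs}, whose groundings are axioms of~$\HTD$ by construction, and through uses of $\HTA$-rules, which are handled uniformly by the base conservativity lemma regardless of which predicate symbols of~$\sigma_1$ appear in the formulas.
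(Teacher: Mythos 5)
Your proposal is correct and follows essentially the same route as the paper: induction over the $\HTAD$-derivation, using that $gr$ of each $\HTA$ axiom is provable in $\HT^\infty$, that $gr$ simulates each $\HTA$ inference rule, and that $gr$ of each instance of \emph{Defs} is by definition an element of $gr(\emph{Defs})$. The paper's own argument is likewise only a sketch that defers the verification of the quantifier rules and the commutation of $gr$ with substitution, exactly the points you flag as the residual work.
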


\noindent\emph{Proof (sketch)}

\noindent
For any axiom~$S$ of~$\HTA$, the formula
$gr(S)$ is provable in~$\HT^\infty$.  (To be precise,
axioms of~$\HTA$ are sequents, and the transformation~$gr$ needs to be
applied to the universal closure of the formula corresponding to~$S$.)
For any instance
$$\frac{S_1\ \cdots\ S_k}{S}$$
of an inference rule of~$\HTA$, the formula $gr(S)$ is derivable from
$gr(S_1),\dots,gr(S_k)$ in~$\HT^\infty$.
It follows that for any formula~$F$ that is derivable from \emph{Defs}
in~$\HTA$, the formula $gr(F)$ is derivable from
$gr(\emph{Defs})$ in~$\HT^\infty$.

\medskip\noindent\emph{Proof of Theorem~\ref{thm3}}

\noindent
By Lemma~\ref{lemma-gr}, if the equivalence
$\tau^*(\Pi_1)\lrar\tau^*(\Pi_2)$ is provable in~$\HTAD$ then the
equivalence
$gr(\tau^*(\Pi_1))\lrar gr(\tau^*(\Pi_2))$ is provable in~$\HTD$.  Then,
by Theorem~\ref{thm2}, the programs~$\Pi_1$ and~$\Pi_2$ are strongly
equivalent.

\section{Related Work}\label{sec:rw}

Fandinno et al.~\citeyear{fan22a} defined a translation similar
to~$\tau^*$ for an answer set programming language that is in some ways
less expressive than mini-\gringo\ with counting (no arithmetic
operations), and in some ways more expressive (the \verb|#sum| aggregate is
allowed, besides \verb|#count|).  The main difference between that
approach to transforming aggregate expressions into formulas and the one
described above is that the former employs function symbols
in the role that predicate symbols~(\ref{newpc}) play here.
As discussed in Footnote~\ref{ft1}, thinking of {\tt \#count} as a
function may be misleading.  This is apparently the reason why
the adequacy of the translation due to Fandinno et al.~is
only guaranteed for programs without
positive recursion through aggregates \cite[Theorem~3]{fan22a}.  This
assumption is not satisfied, for instance, for program~(\ref{g2}),~(\ref{bad}).

The technical problems discussed in this paper are specific for the approach
to aggregates implemented in the answer set solver {\sc clingo} and do not
appear in the same form, for instance, in the theory of the solver {\sc dlv}
\cite{fab11}.  The semantics of aggregates based on the vicious circle
principle \cite{gel19}, unlike the \clingo\ semantics,
makes rule~(\ref{g2}) strongly equivalent to each of the rules
(\ref{g1})--(\ref{g3a}).

\section*{Acknowledgements}

Many thanks to Jorge Fandinno, Michael Gelfond, Yuliya Lierler,
and the anonymous referees
for comments on preliminary versions of this paper.


\end{document}